\DeclareMathOperator*{\argmin}{arg\,min}
\newtheorem{remark}{Remark}
\newtheorem{thm}{Theorem}
\newtheorem{asum}{Assumption}
\DeclareMathOperator*{\argmax}{arg\,max}
 \newcommand{\norm}[1]{\left\lVert#1\right\rVert}
\newcommand{\com}[1]{{\color{red}{Comment: #1}}}
\newcommand{\com}[1]{}
\title{\LARGE \bf
A Game-Theoretic Approach to a Task Delegation Problem}
\author{Donya G. Dobakhshari, Lav R. Varshney, and Vijay Gupta
\thanks{D.~G. Dobakhshari and V. Gupta are with the Department of Electrical Engineering, University of Notre Dame, IN, USA. Email:{\tt\small (dghavide, vgupta2)@nd.edu.}}
\thanks{L.~R. Varshney is with the Department of Electrical and Computer Engineering, University of Illinois at Urbana-Champaign, IL, USA.  Email: {\tt\small varshney@illinois.edu.} His work was supported in part by NSF grant CCF-1623821. }
\thanks{This work has been submitted to the IEEE for possible publication. Copyright may be transferred without notice, after which this version may no longer be accessible.}}%
\begin{document}
\maketitle
\thispagestyle{empty}
\pagestyle{empty}
\begin{abstract}
We study a setting in which a principal selects an  agent to execute a collection of tasks according to a specified priority sequence. Agents, however, have their own individual priority sequences according to which they wish to execute the tasks. There is information asymmetry since each priority sequence is private knowledge for the individual agent. We design a mechanism for selecting the agent and incentivizing the selected agent to realize a priority sequence for executing the tasks that achieves socially optimal performance. Our proposed mechanism consists of two parts. First, the principal runs an auction to select an agent to allocate tasks to with minimum declared priority sequence  misalignment. Then, the principal rewards the agent according to the realized priority sequence with which the tasks were performed. We show that the proposed mechanism is individually rational and incentive compatible. Further, it is also socially optimal for the case of linear cost of priority sequence modification for the agents.
\end{abstract}

\section{Introduction}
Consider a situation in which a system operator must hire one among several agents to execute some tasks. The operator has a quality of service (QoS) constraint that implies a desired order in which the tasks should be executed. The agents, however, may prioritize tasks execution in a different order depending on their own private preferences and shifting their preferred order of execution may impose a cost on the agents. Such misalignment of the preferred order of execution among the principal and the agents, especially with information asymmetry, creates performance inefficiency from the principal's viewpoint. Minimizing this inefficiency requires the principal to devise an appropriate mechanism to select an agent and incentivize him to shift his preferred priority order for executing the tasks.

Such a formulation is relevant to many situations. For instance, in a cloud computing application, users request a Cloud Computing Service Provider (CCSP) to perform a job. The CCSP then allocates the tasks among the servers. If tasks come in at a high rate and the number of servers is limited, the tasks may form a task queue~\cite{CaoHLZ2013, armbrust, PedarsaniWZ2014, ansaripour2013robust}. In this case, the CCSP may have a preferred order in which the tasks are executed based on QoS guarantees it has promised to the users. However, if the servers are independent entities providing service for a fee, they may follow a different order of performing the tasks. This misalignment can cause the CCSP to violate the QoS guarantees it has promised, and hence, degrade system performance.

As another instance, employees of an organization may perform tasks (such as responding to emails in technical support) that are assigned to them in a different order than the one that is desired by the organization. Since the rate at which humans can respond to emails is limited, emails  pile up~\cite{kahneman,spira2011}. People generally do not respond to emails in the received order, but act on them based their own priorities~\cite{dabbish,gagne, isen2005,wainer} which may be based on factors that are both intrinsic (e.g., interest, curiosity, or information gaps) and extrinsic, (e.g., incentives provided by the organization). Thus, a similar problem as we consider arises in which the organization must incentivize employees to respond to tasks according to the order preferred by the organization.


In this paper, we model the problem as one of designing a contract through which the principal (the system operator) asks the agents about their private priorities and incentivizes them to shift their priorities in a way that is socially optimal. Specifically, since the agents incur a cost to change their priorities from their private ones, the principal needs to provide enough incentives so that rational agents will shift their priorities to align with those of the principal. For simplicity, in this work, we assume that only one agent is selected to execute all the tasks. The primary challenge in the design of the contract arises from the hidden nature of the priorities of the agents who are free to misreport them. Thus, a simple compensation scheme based on self-reported priority will not be sufficient as the agents can misreport the baseline or the private priorities. The private nature of the individual priorities at the agents causes adverse selection  and also leads to the effort put in by the agents to change from the initial preferred priority to the realized one to also be hidden. In other words, there is the problem of both \emph{hidden information} and \emph{hidden action} for the principal~\cite[Chapter~14C]{mas1995},~\cite[Chapter~14B]{mas1995}. Further, the principal can observe the priority realized by only the selected agent. Our goal is to design a contract which resolves these issues and incentivizes the agents to put in sufficient effort to realize a priority that optimizes the  social welfare. 

Our solution relies on formulating this problem as a two-stage contract design problem. In the first stage, the principal selects the agent to whom to allocate the tasks using the priorities self-reported by the agents. In the second stage, the principal compensates the selected agent using the priority he realizes in a way that leads to social optimality. We propose a VCG-based mechanism for the first stage in which the agents announce their private priorities to the principal and the principal selects the agent to assign the tasks. We show that the first stage limits  misreporting by the agents. In the second stage, we design a compensation scheme using the observable realized priority by the selected agent and the initially declared priorities. In this two-stage design, the agents bid (possibly falsified) priorities in the first stage and the selected agent optimizes the realized priority for performing the tasks in the second stage. The principal designs the auction in the first stage and the compensation in the second step.  
 
The model considered herein is inspired by~\cite{sharma}, which presents a queueing-theoretic study of the problem. However, unlike~\cite{sharma}, we do not consider the realized priority as a given and fixed function of the priority of the principal and interests of the agent, but as a design parameter for the agent to maximize his own utility. Although there is a vast literature on multi-agent task scheduling literature (see, e.g.,~\cite{ananth, bredin, el1994}), prior work does not consider either information asymmetry between the agents and the principal or the design of incentives. To the best of our knowledge, this paper is the first work to adopt a game-theoretic approach to analyzing priority misalignments between task senders and task receivers. In the mechanism design literature, VCG mechanisms have long been used for incentive design in the case of hidden information between the principal and the agents. In particular, VCG mechanisms are used to incentivize agents to reveal their true private information and to guarantee the efficient (socially optimal) outcome in dominant strategies~\cite[Chapter~23]{mas1995}, ~\cite[Chapter~5]{krishna}. However, a VCG-based mechanism is effective only for the first stage of our problem when we select the agent to perform the tasks and its interaction with the second stage which features compensation for the hidden effort put in by the agent to align his priority with the principal is \textit{a priori} unclear. 
 
Our main contribution is developing a game-theoretic approach to the problem of task allocation and priority realization when there is information asymmetry and possibility of misreporting private information by the agents. The problem features both hidden information and hidden action, and is significantly different than problems of pure adverse selection or pure moral hazard. We propose a VCG-based mechanism followed by an incentivization method for the problem. We show that under the proposed scheme, the agents act truthfully in reporting their preferred priorities in a dominant strategy manner. In addition, the principal can achieve the socially optimal outcome, as well guarantee individual rationality and incentive compatibility, through the proposed mechanism.

The rest  of the paper is organized as follows. Section~\ref{Sec2} presents the problem statement and some preliminaries. Section~\ref{sec3} proposes and analyzes our incentive  mechanism for the principal-agent queuing problem. Section~\ref{sec4} concludes by presenting  potential directions for future work.


\section{Problem Statement}
\label{Sec2}
Consider a group of $N+1$ decision makers. Decision maker $0$ is the principal, who is interested in performing a sequence of tasks with a particular priority. Decision makers $1, \ldots, N$ are agents with their own private priorities for performing the tasks. The principal must incentivize the agents to perform the tasks in the desired order.

\subsection{Model}
The principal seeks to delegate $M$ tasks, denoted by $k=1,\ldots, M$, to a group of self-interested agents,  denoted by $i=1,\ldots, N.$ 
%
 %
 All the decision makers have an associated priority with which they wish to execute the tasks. Let $X=[x_1, \ldots, x_M]$ denote the priority of the principal where $x_k$ is the priority for executing the $k$th task. Similarly, let $Y_i=[y_{i1}, \ldots, y_{iM}]$ denote the  priority for the $i$th agent, where $y_{ik}$ is the priority of the $i$th agent for fulfilling the $k$th task. The vector $X$ is public knowledge while the vector $Y_{i}$ is private knowledge to the $i$th decision maker.
 
\begin{remark}
 Note that in this paper, we assume that  priority vectors are metric data and not ordinal data.
 \end{remark}

The principal selects one agent and incentivizes him to execute the tasks in an order as close to $X$ as possible. Given the incentive, if the $i$th agent is selected to execute the tasks, let $Z_{i}$ be the {\em realized} priority of execution. Further, denote by $h(Y_i, Z_i)$ the effort cost for the agent to change his priority from $Y_{i}$ to $Z_{i}$. In other words, when the agent with priority $Y_{i}$ is selected, and he performs the tasks with priority $Z_{i}$, he incurs the cost $h(Y_i, Z_i)$.

If agent $i$ is selected, $Z_{i}$ is observable to the principal. In other words, the principal can observe the order in which the tasks were actually executed. Contrarily, if $i$ is not selected, neither $Y_{i}$ nor $Z_{i}$ is observable for that agent, since the agent is not assigned any task to realize the priority $Z_i$.


Note that since the principal does not have access to the  priorities $Y_{i}$'s of the agents, these variables are not contractible. In fact, if the principal inquires about the vectors $Y_i$'s, the agents can misreport them as $Y'_i$'s to try and exploit the incentive mechanism to gain more benefit.

 \subsection{Problem Formulation}
\begin{figure}[tbp]
\centering
\includegraphics[width=8.7cm, height=2.1cm]{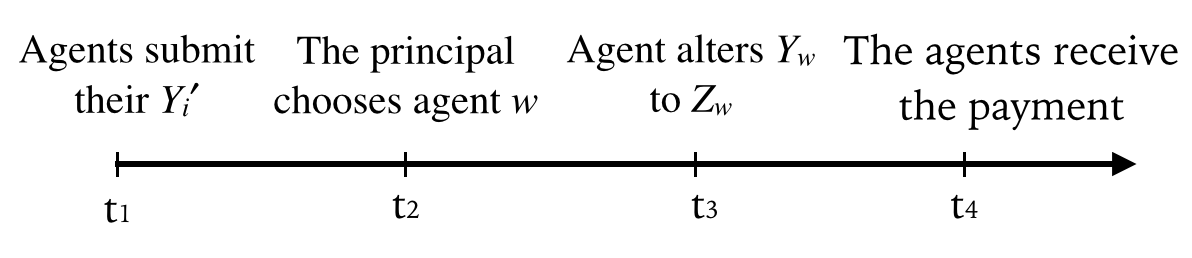}
\caption{Timeline of the interaction between the principal and the agent.}
\label{pic_2}
\end{figure}
Figure \ref{pic_2} demonstrates the timeline of the problem. The principal receives the (possibly false) reported priorities $\{Y'_{i}\}$ and chooses an agent $w$ based on an as yet undetermined mechanism. The principal then observes the realized priority $Z_{w}$ and pays every agent $i$ payment equal to $P_{i}(X,Y_{1}',\ldots,Y_{N}',Z_{w}).$ The mechanism to choose the agent $w$ as well as the payment are committed ex ante. Note that the priority modification to $Z_w$ by the agent enhances the performance of the organization and leads to profit $S(X,Z_w)$ for the principal. 

\begin{remark}
In this paper, we assume an indivisible array of tasks that must all be executed by one agent. Optimally allocating tasks to multiple agents is a significantly harder problem that is left for future work. 
\end{remark}
\begin{remark}
The choice of the agent to execute the tasks is  a challenging problem since the payment function is committed \textit{ex ante}.
\end{remark}

The utilities of the various decision makers are as follows. Suppose that agent $w$ is selected to execute the tasks. Then, the utility $U_{i}$ of the $i$th agent is given by
 \begin{equation}
U_i=
\begin{cases}
P_{i}(X,Y_{1}',\ldots,Y_{N}',Z_{w})-h(Y_w , Z_w )& i=w\\
P_{i}(X,Y_{1}',\ldots,Y_{N}',Z_{w})&i\neq w.
\end{cases}
 \end{equation}
The utility of the principal can be written as 
 \begin{equation}
 V=S(X, Z_w)-\sum_{i=1}^{N} P_{i}(X,Y_{1}',\ldots,Y_{N}',Z_{w}).
  \end{equation}

We are interested, in particular, in mechanisms that are socially optimal (or, in other words, efficient). An incentive mechanism is socially optimal if the decision makers choose to realize an outcome that maximizes the social welfare given by
\[
\Pi=V+\sum_{i=1}^{N} U_i.
 \]

 
The problem faced by each agent is to optimize the choice of reported priority $Y_i'$, and if chosen to perform the tasks the choice of realized priority $Z_i$, to maximize his utility (subject to the principal's choices). The problem faced by the principal is to choose the agent $w$ to execute the tasks and to design the payment $P_{i}(X,Y_{1}',\cdots,Y_{N}',Z_{w})$ to optimize the social welfare (subject to the choices of the agents). Thus, the problem we are interested in can be written as
\begin{equation}
\mathcal{P}_{1}:
\begin{cases}
\{w,P_{i}\}=\argmax \Pi\\
\textrm{subject to }Y_i'^{*}=\argmax ~U_i,~\forall i\neq w,\\
\{Y_w'^{*},~ Z_w^*\}=\argmax ~U_w\\
\text{additional constraints}
\end{cases}.
\label{prob2}
\end{equation}

We consider the following two additional constraints in~$\mathcal{P}_{1}$. 
\begin{enumerate}[(i)]
\item \textit{Individual Rationality (IR)}:  Individual rationality or participation constraint implies that under the incentive mechanism  
\[
V\geq 0,\qquad U_i\geq 0 \mbox{ for all } i~.
\]
Informally, the principal and the agents, acting rationally, prefer to participate in the proposed contract rather than opting out. This constraint limits the space of contracts by, e.g., precluding contracts based only on penalties.
\vspace{2mm}
\item \textit{Incentive Compatibility (IC)}:  A payment or a contract is incentive compatible if  the agents  submit their hidden information truthfully if asked.  Specifically, this constraint implies that the utility of an agent does not increase if they report $Y_{i}'\neq Y_{i}$; or, in other words, for any $i$
\begin{multline*}
   U_i(Y_1,\ldots,Y_{i-1},Y_i',Y_{i+1},\ldots,Y_{N})\\\leq U_i(Y_1,\ldots,Y_{i-1},Y_i,Y_{i+1},\ldots,Y_{N})~. 
\end{multline*}

\end{enumerate}

 \begin{asum}
For simplicity, we define the misalignment between two priority vectors $\Lambda$ and $\Gamma$ as a scalar function $m(\Lambda,\Gamma)$ of the two vectors. The function $m(\cdot,\cdot)$ can, for instance, be the norm of the difference of the two vectors. 
We assume that both the effort cost $h(Y_i , Z_i )$ and the profit $S(X , Z_i )$ are functions of the misalignment $m(Y_i , Z_i )$ and $m(X , Z_i )$ respectively. 
In addition, we define
\begin{align*}
\theta_i&=m(X,Y_i),\qquad \gamma_i=m(X,Z_{i}),\quad \theta'_i=m(X,Y_i'),
\end{align*}
where $\theta_i$ and $\gamma_i$ denote the initial priority misalignment and the realized priority misalignment between the agent $i$ and the principal respectively. Further,  $\theta'_i$ corresponds to priority misalignment declared by the agent initially.  In the sequel, we abuse the notation and denote the effort cost as $h(\theta_{i},\gamma_{i})$  and the profit of the principal as  $S(\gamma_i)$.
\end{asum}

We make the following two further assumptions. 
\begin{asum}
\label{asum44}
If agent $i$ is selected, the realized priority misalignment by the agent is always less than the initial  priority misalignment, i.e., $\gamma_i \leq\theta_i.$ In other words, the agent does not gain any benefit by increasing his priority misalignment with the principal. 
\end{asum}
Given this assumption, the principal can restrict the falsification by the agents in reporting their priorities through an appropriate payment function. Note that $\theta_i$ is unobservable to the principal even if agent $i$ is selected to execute the tasks. Thus, the principal must rely on $\theta'_i$ instead for the payment scheme. However, the principal may pay an agent only if $\gamma_i \leq\theta'_i$ to restrict the falsification by the agent. We assume that such a payment scheme is used and the following behavior is followed by the agents. 
\begin{asum}
The agent $i$, if selected, chooses $\gamma_i$ and $\theta'_i$ such that $\gamma_i \leq\theta'_i$.
\end{asum}

\section{Main Results}
\label{sec3}

We propose a two-step mechanism in which first an agent is selected to execute the tasks through an auction mechanism and then payments are made according to the reported and realized priorities. Note that the hidden nature of the preferred priorities and the effort cost creates the problem of  hidden information (adverse selection) in the first stage and then the problem of hidden action (moral hazard) in the second stage. The constraints of individual rationality and incentive compatibility significantly constrain the design of each of these steps. For instance, at the first stage, an auction which asks the agents to report their priorities and chooses the agents with the least reported priority misalignment will not be incentive compatible since it provides an opportunity for the agents to announce a priority close to that of the principal to be selected. 

Similarly, consider a payment scheme in which the agent $w$ that is selected to execute the tasks which depends  merely on the reported misalignment $\theta'_w$ regardless of the realized priority $Z_w$ (or equivalently $\gamma_{w}$) and ignores the effort cost. Given $\theta_w$, this payment limits the range of realized priority such that 
\[
h(\theta_w, \gamma_w)\leq p(\theta'_w)~.
\]
There is no \textit{a priori} guarantee that the resulting priority vector $Z_{w}$ will be socially optimal. On the other hand, a payment that is merely a function of the realized priority $\gamma_w$ and ignores the self-reported priorities may also be too restrictive. In particular,  individual rationality will once again constrain $Z_{w}$ (or equivalently $\gamma_{w}$) so that given $\theta_{w}$, \[h(\theta_w, \gamma_w)\leq p(\gamma_w)~.\]

Finally, we note that even if the payment depends on both $\theta_{w}'$ and $\gamma_{w}$ to account for the effort cost properly and satisfy individual rationality, the payment still needs to be carefully designed to ensure incentive compatibility. Thus, a payment function $p(\theta_w', \gamma_w)$ that depends on the level of effort cost that the agent claims that he incurred for priority modification provides the opportunity for the agents to behave strategically. For instance, under such a payment, a strategic agent may choose not to exert any effort and choose
\[
\gamma_w=\theta_w,\:\:\: \theta'_w=\argmax p(\theta_w', \gamma_w)~.
\]
Thus, this payment is not incentive compatible since although the strategic agent does not change his priority, he obtains a non-zero payment.

\subsection{Proposed Mechanism}
We now propose a two-step mechanism which attains the desired properties of individual rationality, incentive compatibility, and under further assumptions, social optimality. This mechanism first selects an agent to execute the tasks and then compensates him. 

Recall from Figure~\ref{pic_2}, that the timeline of the problem is as follows: \begin{enumerate}
\item The agents are asked to submit their preferred priority vectors $Y_{i}$'s (equivalently, the variables $\theta_{i}$'s). However, they can misreport the vectors as $Y'_{i}$'s (equivalently as $\theta_{i}'$'s). 
\item The principal chooses an agent as the winner of the auction. Assume that the agent with index $w$ is the winner. Agent $w$ is expected to execute the tasks in the next stage.
\item  Agent $w$ performs the task with a realized priority $Z_{w}$. In other words, it chooses the variable $\gamma_w$ and incurs the corresponding effort cost.  
\item The agent $w$ receives a payment. 
\end{enumerate}
\vspace{1mm}

We now present our proposed mechanism $\mathcal{M}$. 
\begin{enumerate}[(i)]
\item The principal chooses the agent $w$ to execute all the tasks such that $w=\argmin\{\theta_i'\}_{i=1}^{N}$. 
\item The payment to agent $w$ is chosen as a function of $\gamma_w$ and  the second lowest bid $\bar{\theta}=\min \{\theta'_{1}, \cdots, \theta'_{w-1},\theta'_{w+1}, \cdots,\theta'_{N} \}.$ Specifically, we consider a payment  $P_w(\bar{\theta},\gamma_w)$ to agent $w$ which satisfies two properties:
 \begin{equation}
 \label{eq:prop_payment}
\begin{split}
&\text{ $\forall\gamma_w$, if $\theta_w\geq \bar{\theta}$, we have $P_w(\bar{\theta},~\gamma_w)\leq h(\theta_w,~\gamma_w)~$}\\
& \text{$\exists \gamma_w$ s.t. if $\theta_w<\bar{\theta}$, we have $P_w( \bar{\theta},~\gamma_w)> h(\theta_w,~\gamma_w)$}
\end{split}.
 \end{equation}
 \item All other agents $i\neq w$ are not paid.
\end{enumerate}
\begin{remark}
Note the condition in~(\ref{eq:prop_payment}) is essential for inducing incentive compatibility. An example of a payment scheme which satisfies this condition for the  cost function  $h(\theta_w,~\gamma_w)=\theta_w-\gamma_w$ is of the form  $P_w(\bar{\theta},~\gamma_w)=\bar{\theta}-\gamma_w.$
\end{remark}

Under the proposed mechanism, the utilities of the agents  are given by
  \begin{equation}
  \label{agent}
 U_i=\begin{cases}P_i( \bar{\theta},~\gamma_i)- h(\theta_i,~\gamma_i)&  \text{ $i=w$}\\
 0&  \text{$i\neq w$}
 \end{cases},
 \end{equation}
while the utility of the principal and the social welfare can be written as
 \begin{equation}
V=S(\gamma_{w})-P_w(\bar{\theta},~\gamma_w), \quad \Pi= S(\gamma_{w})-h(\theta_w,~\gamma_w)~.
 \end{equation}

The following result shows that the proposed mechanism is incentive compatibile and individually rational. 
\begin{thm}
\label{th1}
Consider the problem $\mathcal{P}_{1}.$ The proposed mechanism $\mathcal{M}$  is incentive compatible, i.e., every agent $i$ reports $\theta_i'=\theta_i.$ Further, it satisfies the individual rationality constraint.
\end{thm}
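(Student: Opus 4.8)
The plan is to analyze the two-step mechanism $\mathcal{M}$ by working backwards: first fix the first-stage outcome and study the selected agent's second-stage choice of $\gamma_w$, then use that to evaluate the payoff each agent receives as a function of the reported misalignment $\theta_i'$, and finally argue that truthful reporting $\theta_i' = \theta_i$ is a dominant strategy. For individual rationality I would separately check that each decision maker's utility is nonnegative under the resulting truthful equilibrium.

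First I would establish the following observation about an arbitrary agent $i$. If agent $i$ is \emph{not} selected ($i \neq w$), then by~\eqref{agent} his utility is $0$ regardless of his report, so he is (weakly) indifferent and reporting $\theta_i' = \theta_i$ is without loss. So the only way a deviation can strictly help agent $i$ is by changing whether he wins the auction. By step~(i) of $\mathcal{M}$, agent $i$ wins exactly when $\theta_i'$ is the smallest declared misalignment; the relevant comparison is between $\theta_i'$ and the minimum $\bar\theta$ of the other agents' declared values. There are two cases to rule out: (a) agent $i$ would win truthfully ($\theta_i \le \bar\theta$, say) but under-reports or over-reports anyway, and (b) agent $i$ would lose truthfully ($\theta_i > \bar\theta$) but under-reports some $\theta_i' < \bar\theta$ to win. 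For case (a): if agent $i$ would win truthfully and still wins after a deviation, the payment $P_w(\bar\theta,\gamma_w)$ does not depend on $\theta_i'$ at all (only on $\bar\theta$ and the realized $\gamma_w$), and the effort cost $h(\theta_w,\gamma_w)$ depends only on his true $\theta_w$ — so the deviation changes nothing; if the deviation makes him lose, his utility drops to $0$, and I must check $0$ is not an improvement, i.e. that a truthful winner's optimal-$\gamma_w$ utility is nonnegative, which follows from the second line of~\eqref{eq:prop_payment} (with $\theta_w < \bar\theta$, there exists $\gamma_w$ with $P_w(\bar\theta,\gamma_w) > h(\theta_w,\gamma_w)$, giving strictly positive utility; the boundary case $\theta_w = \bar\theta$ needs a word, perhaps a tie-breaking convention or continuity). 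The crux is case (b): here I invoke the first line of~\eqref{eq:prop_payment}: since agent $i$'s \emph{true} $\theta_i$ satisfies $\theta_i \ge \bar\theta$, for every choice of $\gamma_i$ we have $P_w(\bar\theta,\gamma_i) \le h(\theta_i,\gamma_i)$, so his utility as a (fraudulent) winner is $P_w(\bar\theta,\gamma_i) - h(\theta_i,\gamma_i) \le 0$ no matter what $\gamma_i$ he then realizes — hence at most the $0$ he would have gotten by losing. So the deviation is not strictly profitable. This establishes incentive compatibility.

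For individual rationality: the principal's utility is $V = S(\gamma_w) - P_w(\bar\theta,\gamma_w)$. I would need to either invoke an implicit assumption that the profit $S$ dominates the payment in the relevant range, or note that the winner chooses $\gamma_w$ to maximize his own utility and argue the resulting $P_w$ is affordable; this is the point where the statement is least airtight and I expect I would lean on the linear model ($S(\gamma_w)$, $h(\theta_w,\gamma_w) = \theta_w - \gamma_w$, $P_w(\bar\theta,\gamma_w) = \bar\theta - \gamma_w$) or an added monotonicity assumption on $S$. For the agents, under truthful reporting a non-winner gets $U_i = 0 \ge 0$ trivially; the winner $w$ has $\theta_w = \min_i \theta_i \le \bar\theta$, and if $\theta_w < \bar\theta$ the second line of~\eqref{eq:prop_payment} gives a $\gamma_w$ with $U_w = P_w(\bar\theta,\gamma_w) - h(\theta_w,\gamma_w) > 0$, which the rational winner will select (or do even better), so $U_w \ge 0$; the tie case $\theta_w = \bar\theta$ again needs the tie-breaking/continuity remark.

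The main obstacle I anticipate is the boundary case $\theta_w = \bar\theta$ (truthful winner exactly tied with the runner-up): the two conditions in~\eqref{eq:prop_payment} are stated as strict/non-strict inequalities that leave this case ambiguous, so I would handle it by adopting a tie-breaking rule (ties broken in the winner's favor only if participation is still weakly rational) or by a limiting argument using continuity of $h$ and $P_w$ in $\theta_w$. A secondary subtlety is making precise that "agent $i$ cannot profitably manipulate \emph{which} agent wins" covers not just flipping his own win/loss status but also, when there are three or more agents, the possibility that his report changes $\bar\theta$ for \emph{another} agent — but since agent $i$'s own utility is $0$ whenever $i \neq w$, such manipulations never help him, so this reduces to the two cases above.
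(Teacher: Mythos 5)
Your proposal is correct and follows essentially the same route as the paper's own proof: a case split on the true $\theta_i$ versus the lowest competing bid $\bar{\theta}$, using the first line of~\eqref{eq:prop_payment} to show a fraudulent winner nets at most zero and the second line to show a truthful winner does at least as well as losing, with agent IR read off from the nonnegativity of both cases. The caveats you flag (the tie $\theta_w=\bar{\theta}$ and the principal's side of IR, $V\geq 0$) are real but are points the paper's proof also leaves implicit rather than gaps in your argument relative to it.
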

\begin{proof}
 See Appendix.
 \end{proof}
 
\begin{remark}
\label{rem4}
The proposed mechanism resembles the celebrated VCG mechanism in the way it selects $w$ and in the structure of the proposed payment. However, beyond the fact that the payment depends on the additional parameter $\gamma_{w}$, note that the standard solution of offering a payment of the form $S(\gamma_w)-\Pi^{\star}$, where $\Pi^{\star}$ denotes the the value of the social welfare under the socially optimal outcome, will not result in the agent $w$ realizing the socially optimal outcome in our case. This payment violates~(\ref{eq:prop_payment}) and therefore violates incentive compatibility constraint.
 \end{remark}

Although social optimality is difficult to achieve for a general form of the effort cost, it can be achieved for the case of linear effort cost, i.e., when $h(\theta_i,~\gamma_i)=|\theta_i-\gamma_i|$.
\vspace{2mm}

\begin{thm}
\label{pro1}
Consider the problem $\mathcal{P}_{1}$ with the mechanism $\mathcal{M}$. If the effort cost is linear and the payment is chosen to be of the form $P_w=\bar{\theta}-\gamma_w,$ then $\mathcal{M}$ solves $\mathcal{P}_{1}$. Specifically, the mechanism
\begin{enumerate}[(i)]
\item guarantees truth-telling by the agents , i.e.,
$\theta'_i=\theta_i$, in (weakly) dominant strategy,
\item realizes the socially optimal outcome, and
\item is individually rational.
\end{enumerate}
\end{thm}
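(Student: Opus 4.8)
The plan is to establish the three claims in sequence, leaning on Theorem~\ref{th1} for the bulk of the work. First, since the payment $P_w = \bar{\theta} - \gamma_w$ and the linear effort cost $h(\theta_w,\gamma_w) = |\theta_w - \gamma_w| = \theta_w - \gamma_w$ (using Assumption~\ref{asum44}, so $\gamma_w \le \theta_w$) together satisfy the two conditions in~(\ref{eq:prop_payment}) --- indeed, if $\theta_w \ge \bar{\theta}$ then $\bar{\theta} - \gamma_w \le \theta_w - \gamma_w$ for every $\gamma_w$, and if $\theta_w < \bar{\theta}$ then $\bar{\theta} - \gamma_w > \theta_w - \gamma_w$ for every feasible $\gamma_w$ --- this particular $(P_w, h)$ pair is a legitimate instance of the mechanism $\mathcal{M}$. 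Hence Theorem~\ref{th1} applies verbatim, giving claims (i) truth-telling in weakly dominant strategy and (iii) individual rationality immediately. So the only real content is claim (ii), social optimality.

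For social optimality I would argue as follows. By claim (i) we may assume $\theta'_i = \theta_i$ for all $i$, so the selected agent is $w = \argmin_i \theta_i$ and the relevant competing bid is $\bar\theta = \min_{i\ne w}\theta_i \ge \theta_w$. The selected agent chooses $\gamma_w$ to maximize his own utility
\[
U_w(\gamma_w) = P_w(\bar\theta,\gamma_w) - h(\theta_w,\gamma_w) = (\bar\theta - \gamma_w) - (\theta_w - \gamma_w) = \bar\theta - \theta_w,
\]
which is \emph{constant} in $\gamma_w$: with the linear cost, the marginal payment for reducing misalignment exactly cancels the marginal effort cost, so the agent is indifferent over his whole feasible range $\gamma_w \in [0,\theta_w]$ (or whatever the relevant domain is, with $\gamma_w \le \theta'_w = \theta_w$ per the last assumption). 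The social welfare is $\Pi(\gamma_w) = S(\gamma_w) - h(\theta_w,\gamma_w) = S(\gamma_w) - \theta_w + \gamma_w$. I would then note that the socially optimal choice $\gamma_w^\star = \argmax_{\gamma_w} \big(S(\gamma_w) + \gamma_w\big)$ lies in the agent's indifference set, so it is (weakly) optimal for the agent to select it; under the standard tie-breaking convention that an indifferent agent picks the principal-preferred action, $\mathcal{M}$ realizes $\Pi^\star$. It remains to check that this is consistent with the agent selection in stage one being socially optimal --- i.e.\ that choosing the agent with the smallest $\theta_w$ is right. Here I would observe that $\Pi^\star = \max_{\gamma}(S(\gamma)+\gamma) - \theta_w$ is decreasing in $\theta_w$, so the minimum-$\theta$ agent indeed yields the largest attainable social welfare, closing claim (ii).

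The main obstacle, and the place to be careful, is the indifference argument: the agent's utility being flat in $\gamma_w$ means social optimality holds only in the weak sense, contingent on a tie-breaking rule, and one must make sure the feasible set for $\gamma_w$ actually contains $\gamma_w^\star$ (this is why Assumption~\ref{asum44} and the $\gamma_w \le \theta'_w$ assumption matter, and why one needs $S(\cdot)+(\cdot)$ to attain its max on the admissible interval). A secondary point worth stating explicitly is why the seemingly natural ``$S(\gamma_w) - \Pi^\star$'' payment fails here --- as Remark~\ref{rem4} notes, it violates~(\ref{eq:prop_payment}) --- which motivates why the cancellation structure of the linear case is exactly what rescues both incentive compatibility and efficiency simultaneously.
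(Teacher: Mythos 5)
Your proposal is correct and follows essentially the same route as the paper: verify that $P_w=\bar\theta-\gamma_w$ with $h=\theta_w-\gamma_w$ meets the conditions in~(\ref{eq:prop_payment}) so that the Theorem~\ref{th1} argument delivers truth-telling and individual rationality, and then observe that the winner's utility $\bar\theta-\theta_w$ is independent of $\gamma_w$, so he is indifferent and can realize the socially optimal $\gamma_w^\star$. Your explicit remarks about the tie-breaking convention behind the indifference step and about the stage-one selection of the minimum-$\theta$ agent being welfare-maximizing are slightly more careful than the paper's write-up, but they do not change the underlying argument.
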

\begin{proof}
See Appendix.
\end{proof}

\subsection{Discussion}
The problem that we consider is challenging primarily because it exhibits both hidden information and hidden effort on the part of the agents without any recourse to verification. The combination of adverse selection and moral hazard creates a possibility of rich strategic behavior by the agents. We need to design both an auction and a compensation scheme. If the problem were of either auction design or compensation design alone, a rather standard mechanism can solve the problem. Specifically, for auction design when our focus is on a problem of adverse selection, we can achieve incentive compatibility, individual rationality, and social optimality by the VCG mechanism. Similarly, for compensation design to counteract pure moral hazard, social optimality and individual rationality can be realized through the standard contract of the form discussed in Remark~\ref{rem4}.

However, our problem features both adverse selection and moral hazard, and thus requires an auction followed by compensation for priority alignment. In this case, strategic agents can exploit the information asymmetry to degrade the efficiency of the outcome under either of the standard solutions for auction design or compensation for moral hazard alone. In other words, we can not achieve both incentive compatibility and social optimality. This result is similar in spirit to the so-called 
price of anarchy which captures the inefficiency in a system as a result of selfish/strategic behavior of the agents. 

The surprising result in Theorem~\ref{pro1} is that for a specific effort cost, we are able to realize all three properties of individual rationality, incentive compatibility, and social optimality even in this challenging setup. That this is possible was not \textit{a priori} obvious, and it would be interesting to identify further properties, such as budget balance, that may be achievable. Impossibility results such as~\cite{myerson,green} may seem contradictory to the goal of obtaining an efficient, individually rational, budget-balanced mechanism. It should be noted, however, that we are restricting attention to a specific class of valuation functions that are different from those studied in these results; hence, the existing impossibility results may not hold anymore.  Studying the existence of such behavior will be left as a future direction.

\section{Conclusion}
\label{sec4}
In this paper, we studied the problem of contract design between a system operator and a group of agents that each have a desired sequence of performing a collection of tasks. Since the priority orders for the agents is private information for them and these orders may not align with that of the principal, there is information asymmetry. The principal selects one of the agents to execute the tasks and wishes to realize the socially optimal outcome. The problem is to design a mechanism for selecting the agent to execute the tasks and to compensate him to minimize the misalignment of the realized priority with the one that is socially optimal. The problem features both moral hazard and adverse selection. 
We proposed a two-stage mechanism including a VCG-like mechanism for task allocation followed by a compensation mechanism. We showed that the mechanism is individually rational, incentive compatible, and for linear effort costs, socially optimal. 

Future work will consider the more general case where there are divisible tasks so that multiple agents need to be selected~\cite{sharma,ChatterjeeVV2015}. This problem adds task-scheduling to the mechanism design; in other words, the principal must solve a resource allocation problem followed by compensation design. Other directions include considering the possibility of designing a mechanism that is also budget balanced in addition to being individually rational, incentive compatible, and socially optimal.


\section*{Appendix}
\subsection{Proof of Theorem \ref{th1}}

Given an arbitrary agent $i$, its hidden priority $\theta_i$, and  the reported priority misalignment of the other players,  we need to show that  utility of agent $i$ is maximized by setting $\theta_i'=\theta_i$.  Note also that $\bar{\theta}$ denotes the lowest priority misalignment reported by the other agents.  If $\theta_i>\bar{\theta}$, then agent $i$ loses and receives utility $0$. If $\theta_i\leq\bar{\theta}$, then agent $i$ wins the tasks and receives utility $P_i( ~\bar{\theta},~\gamma_i)-h(\theta_i,~\gamma_i)$ for performance of task.

We consider two cases. First, if  $\theta_i> \bar{\theta}$, the highest utility that agent  $i$ can gain for any value of $\gamma_i$ is given by\[
\max\{0, P_i( ~\bar{\theta},~\gamma_i)-h(\theta_i,~\gamma_i) \}~.\]
According to~(\ref{eq:prop_payment}), we can obtain that
\[
\max\{0, P_i( ~\bar{\theta},~\gamma_i)-h(\theta_i,~\gamma_i) \}=0~.\]
 Thus, agent $i$ can achieve this utility by bidding his priority  truthfully (and losing the auction). Second, if $\theta_i\leq \bar{\theta}$, the highest utility that agent  $i$ can gain according to  our mechanism is \[\max\{0, P_i( ~\bar{\theta},~\gamma_i)-h(\theta_i,~\gamma_i) \}=P_i( ~\bar{\theta},~\gamma_i)-h(\theta_i,~\gamma_i),\]
and agent $i$ can achieve this utility by bidding his priority truthfully and winning the auction. Note that the utility of the agent for each case is always non-negative and therefore the mechanism satisfies individual rationality constraint.
 \vspace{2mm}
 
\subsection{Proof of Theorem \ref{pro1}}
 
 First notice that according to Assumption \ref{asum44}, we can write the effort cost as $h(\theta_i,~\gamma_i)=\theta_i-\gamma_i$.
\begin{enumerate}[(i)]

\item We first prove that the proposed mechanism design induces truth-telling as a  dominant strategy, i.e., it is incentive compatible. Similar to the proof of Theorem \ref{th1}, we consider two cases. First, if  $\theta_i> \bar{\theta}$, the highest utility that agent  $i$ can get is 
\[
\max\{0, P_i(\bar{\theta},~\gamma_i)-h(\theta_i,~\gamma_i) \}~.\] 

 Given $P_i(\bar{\theta},~\gamma_i)=\bar{\theta}-\gamma_i$ and $h(\theta_i,~\gamma_i)=\theta_i-\gamma_i$, the highest utility that agent  $i$ can get if $\theta_i> \bar{\theta}$ is 
\begin{equation}
\max\{0, \bar{\theta}-\gamma_i-(\theta_i-\gamma_i) \}=\{0, \bar{\theta}-\theta_i\}=0~,
\label{utilityl}
\end{equation}
 and agent $i$ gains this utility by bidding truthfully and losing the auction. Second, if $\theta_i\leq \bar{\theta}$, the highest utility that agent  $i$ can get is 
 \begin{equation}
 \max\{0, \bar{\theta}-\gamma_i-(\theta_i-\gamma_i) \}=\bar{\theta}-\theta_i~,
 \label{utilitys}
 \end{equation}
 and agent $i$ gains this utility by bidding his priority truthfully and winning the auction. Note that another approach to check incentive compatibility of $P_i$ is to see that $P_i$ satisfies \eqref{eq:prop_payment}.
 
\item Next, we show that the agents realizes $\gamma^{\star}_w$ through this payment. 
 The socially optimal outcome is obtained as
 \[
\gamma^{\star}_w=\argmax S(\gamma_w)-h(\theta_w,~\gamma_w)~.
 \]
 On the other hand, given \eqref{utilityl} and \eqref{utilitys}, the utilities of the agents are given by
  \begin{equation}
  \label{uagent}
 U_i(\theta_i,\bar{\theta})=\begin{cases}\bar{\theta}-\theta_i,& \text{ $i=w$}\\
 0, &  \text{ $i\neq w$}
 \end{cases},
 \end{equation}
  which does not depend on the value of $\gamma_i$. Thus, the agent is indifferent among his realized priorities and we conclude that  the agents realize the socially optimal outcome $\gamma_w^{\star}$.
  
 \item Note that the utility of the agent in \eqref{uagent} for each case is always non-negative and therefore the proposed mechanism satisfies individual rationality constraint. 
\end{enumerate}

\bibliographystyle{IEEEtran}
%

\bibliography{reference}
\balance

\end{document}